\newtheorem{theorem}{Theorem}
\newtheorem{remark}{Remark}
\newtheorem{lemma}{Lemma}
\newcommand{\x}{\textup{x}}
\newcommand{\w}{\textup{w}}
\newcommand{\ph}{\varphi}
\newcommand{\eps}{\varepsilon}
\newcommand{\rd}{\color{red}}
\title{\LARGE \bf
Lyapunov--Krasovskii functionals for some classes of \\ nonlinear time delay systems}
\author{Gerson Portilla$^{1}$, Irina V. Alexandrova$^{2}$ and Sabine Mondi\'e$^{1}$ 
\thanks{*The work of the first and third authors was supported by project CONACYT A1-S-24796, Mexico. The work of the second author was supported
by the Russian Science Foundation, Project 19-71-00061.}
\thanks{$^{1}$Gerson Portilla and Sabine Mondi\'e are with Department of Automatic Control, CINVESTAV-IPN, 07360 Mexico D.F., Mexico
        {\tt\small {gportilla,smondie}@ctrl.cinvestav.mx}}%
\thanks{$^{2}$Irina V. Alexandrova is with Department of Applied Mathematics and Control Processes,
        St.Petersburg State University, 199034 St.Petersburg, Russia
        {\tt\small i.v.aleksandrova@spbu.ru}}%

}
\begin{document}
\maketitle
\thispagestyle{empty}
\pagestyle{empty}

\begin{abstract}
In this contribution, we study an homogeneous class of nonlinear time delay systems with time-varying perturbations. Using the Lyapunov-Krasovskii approach, we introduce a functional that leads to perturbation conditions matching those obtained previously in the Razumikhin framework. The functionals are applied to the estimation of the domain of attraction and of the system solutions. An illustrative example is given.
\end{abstract}

\section{Introduction}
The homogeneous approximation is the top choice for the stability analysis for nonlinear systems when the first approximation is zero. This strategy along with the Lyapunov framework have produced several results on stability \cite{aleksandrov2012asymptotic,aleksandrov2014delay,efimov2016}, robustness \cite{rosier1992homogeneous}  and controller design \cite{hermes1991homogeneous}.  The Lyapunov-Razumikhin approach has had a key role in the development of the main results on homogeneous approximations of time-delay systems: delay-independent stability \cite{aleksandrov2012asymptotic,aleksandrov2014delay,efimov2016} and applications such as estimates of the solutions \cite{aleksandrov2012asymptotic,aleksandrov2016asymptotic} and of the region of attraction \cite{aleksandrov2014delay}. The Lyapunov-Razumikhin framework allowed establishing an outstanding result: if the delay-free system is asymptotically stable, then the trivial solution of the homogeneous delay system is asymptotically stable for all delays when the homogeneity degree is strictly greater than one \cite{aleksandrov2012asymptotic}.\\ A common practice is to use the Lyapunov function of the delay-free system to determine the stability of the system in presence of delays, as proposed in the Lyapunov-Razumikhin approach.  The asymptotic stability of delay free systems with time-varying perturbations has been addressed in \cite{aleksandrov1996stability} in the framework of the Lyapunov direct method.  There, a method for constructing a Lyapunov functions was given, and conditions insuring the system subject to such time varying perturbations remains asymptotically stable was presented. In \cite{aleksandrov2012asymptotic}, the above Lyapunov function for a delay free system was used in the Lyapunov-Razumikhin framework to determine  conditions for the asymptotic stability of time-delay systems with perturbations.\\
In recent years, the study, in the Lyapunov-Krasovskii framework, of homogeneous systems with delays  has received some attention thanks to the Lyapunov-Krasovskii functional for systems of homogeneity degree strictly greater than one introduced in
\cite{Voronezh, alexandrova2019lyapunov}. Its construction, whose starting point is the homogeneous delay free system
Lyapunov function, is inspired by the so-called complete type
functionals approach \cite{kharitonov2013time}.\\
In \cite{aleksandrov2012asymptotic,aleksandrov2016asymptotic}, 
the following class of nonlinear systems with time-varying perturbations
\begin{equation}\label{eq:system_zero_mean_value}
    \dot{x}(t)=f(x(t),x(t-h))+B(t)Q(x(t),x(t-h))
\end{equation}
is studied.
Here, $x(t)\in \mathbb{R}^n,$ $h>0$ is a constant delay, $f(\x_1,\x_2)$ is a homogeneous vector function of degree $\mu>1,$ that is
$$
f(c\x_1,c\x_2)=c^\mu f(\x_1,\x_2)\quad\forall c>0,\;\x_1,\x_2\in \mathbb{R}^n,
$$
satisfies the Lipschitz condition and is continuously differentiable with respect to $\x_2,$
$B(t)$ is a matrix with continuous and bounded elements, $\|B(t)\|\leq \hat{b},$ and the vector function $Q(\x_1,\x_2)$ is continuously differentiable and satisfies
\begin{align}\label{eq:bound_Q}
    \|Q(\x_1,\x_2)\|&\leq p_1\|\x_1\|^{\sigma}+p_2\|\x_2\|^{\sigma},\quad p_1,p_2\geq 0,\\
     \left\|\frac{\partial Q(\x_1,\x_2)}{\partial \x_j}\right\|&\leq q_{j1}\|\x_1\|^{\sigma-1}+q_{j2}\|\x_2\|^{\sigma-1},\notag
\end{align}
$q_{jk}\geq 0,\;j,k=1,2$. The following result is known.

\begin{theorem} \textup{\cite{aleksandrov2012asymptotic,aleksandrov2016asymptotic}} \label{thm_Aleksandrov}
If the delay free system
\begin{equation} \label{eq:delay_free_sys}
    \dot{x}=f(x(t),x(t))
\end{equation}
is asymptotically stable, then the asymptotic stability is preserved for the trivial solution of system \eqref{eq:system_zero_mean_value} with perturbations of the following classes: \\$(a)$ the integral
\begin{align}
\label{eq:int_B}
I(t)=\int_0^t B(s) ds
\end{align}
is bounded and $\sigma>\dfrac{\mu+1}{2};$ \\$(b)$ integral \eqref{eq:int_B} is unbounded but
\begin{align}
\label{eq:integrate_T}
\frac{1}{T}\int_t^{t+T} B(s)ds \xrightarrow[T\to +\infty]{} 0
\end{align}
uniformly with respect to $t\geq 0,$ and $\sigma\geq\mu.$
\end{theorem}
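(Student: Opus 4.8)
The plan is to argue entirely in the Lyapunov--Razumikhin framework, since the two classes $(a)$ and $(b)$ differ only in how the time-varying factor $B(t)$ is absorbed. First I would invoke a converse theorem for homogeneous systems (Zubov/Rosier) to obtain from the asymptotic stability of \eqref{eq:delay_free_sys} a continuously differentiable, homogeneous Lyapunov function $V$ of some degree $\nu$ satisfying $c_1\|x\|^\nu\le V(x)\le c_2\|x\|^\nu$, $\|\partial V/\partial x\|\le c_3\|x\|^{\nu-1}$ and $\frac{\partial V}{\partial x}f(x,x)\le -c_4\|x\|^{\nu+\mu-1}$. Along \eqref{eq:system_zero_mean_value} this yields
\begin{align*}
\dot V &= \frac{\partial V}{\partial x}\,f(x(t),x(t)) \\
&\quad+ \frac{\partial V}{\partial x}\bigl[f(x(t),x(t-h))-f(x(t),x(t))\bigr] \\
&\quad+ \frac{\partial V}{\partial x}\,B(t)\,Q(x(t),x(t-h)),
\end{align*}
whose first term is the negative term of degree $\nu+\mu-1$. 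Under the Razumikhin condition $V(x(t+\theta))\le q\,V(x(t))$ for $\theta\in[-h,0]$ and some $q>1$, the bounds on $V$ give $\|x(t-h)\|\le\gamma\|x(t)\|$, so every delayed quantity is estimated through $\|x(t)\|$. Bounding the second term via the Lipschitz property and the $\x_2$-differentiability of $f$ and showing it is dominated by the first reproduces the known delay-independent stability of the nominal system; I would dispose of this part first.

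The crux is the perturbation term, whose direct estimate $|\frac{\partial V}{\partial x}B(t)Q|\le C\|x(t)\|^{\nu+\sigma-1}$ is dominated by the decay only when $\sigma>\mu$, which is too weak for class $(a)$. To reach $\sigma>\frac{\mu+1}{2}$ I would replace $V$ by the modified functional
\begin{equation*}
W(t)=V(x(t))-\frac{\partial V}{\partial x}(x(t))\,I(t)\,Q(x(t),x(t-h)),
\end{equation*}
with $I(t)$ as in \eqref{eq:int_B}. Because $\sigma>1$ and $\|I(t)\|$ is bounded in class $(a)$, the correction is of higher order in the state than $V$ with bounded coefficient, so $W$ is positive definite and equivalent to $V$ near the origin. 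Differentiating, the contribution $-\frac{\partial V}{\partial x}\dot I\,Q=-\frac{\partial V}{\partial x}B(t)Q$ cancels the perturbation term in $\dot V$ exactly, leaving $\dot W=-W_0+(\text{delay term})+R(t)$, where $W_0\ge c_4\|x\|^{\nu+\mu-1}$ and $R(t)$ is built from $I(t)$ together with the time derivatives of $\frac{\partial V}{\partial x}$ and $Q$. In $R(t)$ those time derivatives produce $\dot x$, whose lowest-degree and hence near the origin dominant contribution is the perturbation part $B(t)Q$ of degree $\sigma$; combined with the spatial factor of degree $\nu+\sigma-2$ and the bounded $I(t)$, this makes $R(t)$ of degree $\nu+2\sigma-2$, which is strictly higher order than the decay $\nu+\mu-1$ precisely when $2\sigma>\mu+1$. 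This is exactly where the threshold $\frac{\mu+1}{2}$ appears. I expect the main obstacle to be that the time derivatives in $R(t)$ also contain $\dot x(t-h)$, hence $x(t-2h)$, which lies outside the Razumikhin window $[t-h,t]$; I would resolve this by propagating the a priori state bound over successive intervals of length $h$.

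For class $(b)$ the integral \eqref{eq:int_B} is unbounded, so the correction in $W$ no longer has bounded coefficient and this device fails; moreover $\sigma\ge\mu$ only makes the perturbation of the same order $\nu+\mu-1$ as the decay, so order counting alone cannot conclude. Here I would instead exploit the averaging hypothesis \eqref{eq:integrate_T}: partition $[t_0,\infty)$ into windows of length $T$, on each window replace the perturbation effect by the average of $B$, which \eqref{eq:integrate_T} renders arbitrarily small uniformly in $t$ for $T$ large, and show that the accumulated perturbation over each window is only a small fraction of the guaranteed decay of $V$, so that $V(x(t))\to0$. This averaging estimate, rather than any single differential inequality, is the genuinely delicate step of the proof, and I would expect to close it with a comparison argument on the envelope of $V(x(t))$ combined with the uniform smallness of the window average.
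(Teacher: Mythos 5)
Your route is genuinely different from the paper's: the paper never argues in the Razumikhin framework, but instead constructs the Lyapunov--Krasovskii functional \eqref{eq:functional_zero_mean_2} and verifies Theorem~\ref{thm_Aleksandrov} through the three bounds of Lemmas~\ref{lemma_bound_dot_v_pertur_zero_mean}--\ref{lemma_upper_bound} together with the junction result of \cite{alexandrova2018junction}. What you propose is essentially a reconstruction of the original Razumikhin proofs of \cite{aleksandrov2012asymptotic,aleksandrov2016asymptotic}, which would be a legitimate alternative; however, as written it has two genuine gaps. The first concerns class $(a)$: your corrected quantity $W(t)=V(x(t))-\frac{\partial V}{\partial x}(x(t))\,I(t)\,Q(x(t),x(t-h))$ depends on the delayed state $x(t-h)$, so it is a functional on $C_{[-h,0]}$, not a function of $(t,x(t))$, and the Razumikhin theorem you intend to apply simply does not accept such an object. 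The exact cancellation you gain by inserting the argument $x(t-h)$ is precisely what causes this, and it is also what generates $\dot x(t-h)$, hence $x(t-2h)$, in the derivative. The repair used in the cited references (and mirrored by the paper, whose correction term is $-L(t,\varepsilon)Q(\varphi(0),\varphi(0))$ with \emph{both} arguments at the current state) is to take the correction with $Q(x(t),x(t))$, accept that the cancellation of the perturbation term is inexact, and bound the residual $\bigl(\frac{\partial V}{\partial x}\bigr)^T B(t)\bigl[Q(x(t),x(t-h))-Q(x(t),x(t))\bigr]$ via the differentiability of $Q$ in its second argument and a Razumikhin condition taken over $[t-2h,t]$. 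Your order count survives this change: the residual has degree $\nu+2\sigma-2$ when $\sigma\le\mu$, so the same threshold $2\sigma>\mu+1$ appears; but the object you differentiate must change, otherwise the framework collapses.

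The second gap is more serious: class $(b)$ is not actually proved. You correctly observe that the correction with $I(t)$ fails when $I(t)$ is unbounded, but you then substitute a window-averaging scheme that you yourself call ``the genuinely delicate step'' and leave open; this is the crux of the theorem. The known unified device, used both in the original Razumikhin proofs and in the paper's functional, is to replace $I(t)$ by the exponentially weighted integral $L(t,\varepsilon)=\int_0^{t}e^{-\varepsilon(t-s)}B(s)\,ds$ (cf.\ \eqref{eq:L_varepsilon}). Under hypothesis \eqref{eq:integrate_T} one has $\varepsilon\|L(t,\varepsilon)\|\le\omega(\varepsilon)$ with $\omega(\varepsilon)\to 0$ as $\varepsilon\to 0$ \cite{fink2006almost}, so for each fixed $\varepsilon>0$ the correction term is bounded and the same cancellation argument goes through; the price is a single extra term $\varepsilon\bigl(\frac{\partial V}{\partial x}\bigr)^T L(t,\varepsilon)Q(x,x)$ of homogeneity degree $\nu+\sigma-1\ge\nu+\mu-1$ (this is where $\sigma\ge\mu$ enters), whose coefficient $\omega(\varepsilon)$ can be made arbitrarily small and is therefore absorbed by the decay term. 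Setting $\varepsilon=0$ recovers class $(a)$, so one construction covers both cases. Without this device, your case $(b)$ remains a plan rather than a proof, and it is not clear that the naive averaging argument closes: making the window average small forces $T$ large, while the error committed by freezing the state over a window of length $T$ must then be controlled by the very decay one is trying to establish.
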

In case $(a),$ the elements of $B(t)$ may describe periodic oscillations with zero mean values, without any restriction on their amplitudes, whereas assumption $(b)$ holds, in particular, when the elements
of $B(t)$ are almost periodic functions with zero mean values. It is also known \cite{aleksandrov2014delay} that if system \eqref{eq:delay_free_sys} is asymptotically stable, then the trivial solution of system \eqref{eq:system_zero_mean_value} with $B(t)\equiv \mathbf{0}$ is asymptotically stable for all values of the delay.
The aim of this paper is to construct the Lyapunov-Krasovskii functionals for system \eqref{eq:system_zero_mean_value}, which on the one hand, allow to verify the results of \cite{aleksandrov2012asymptotic,aleksandrov2016asymptotic}, and on the other hand, are useful in practical applications such as the construction of estimates for the solutions, etc. The functionals are based on the Lyapunov functions for delay free systems constructed by the method introduced in \cite{aleksandrov1996stability}. It is worth mentioning that they are not directly derived from an application of the construction \cite{alexandrova2019lyapunov} with the Lyapunov function \cite{aleksandrov1996stability}. \\
The contribution is organised as follows. In Section II, the main properties of homogeneous systems are
introduced. The construction of the functional is presented in Section III. In Section IV, estimates of the attraction region and of the solutions achieved by using this functional are presented. Section V is devoted to an illustrative example. The paper ends with conclusions.
 
\section{Preliminaries}
In this paper, we assume that the initial functions belong to the space of $\mathbb{R}^n$-valued continuous functions on $[-h,0],$ which is denoted by $C_{[-h,0]}.$ This space is endowed with the norm $\|\varphi\|_h=\max_{\theta\in[-h,0]}\|\varphi(\theta)\|,$ where $\|\cdot\|$ stands for the Euclidean norm. Function $x(t,\ph)$ denotes the solution of system \eqref{eq:system_zero_mean_value} with an initial function $\ph,$ whereas $x_t$ is the state of system \eqref{eq:system_zero_mean_value}:
$$
x_t:\quad\theta\to x(t+\theta),\quad \theta\in[-h,0].$$
Assume that the vector function $f(\x_1,\x_2)$ is continuously differentiable with respect to $\x_1.$ Due to the homogeneity, there exist $m_1,m_2,\eta_{11},\eta_{12}\geq 0$ such that
\begin{align}\label{eq:bound_f}
    \|f(\x_1,\x_2)\|&\leq m_1\|\x_1\|^{\mu}+m_2\|\x_2\|^{\mu},\\
    \left\|\frac{\partial f(\x_1,\x_2)}{\partial \x_1}\right\|&\leq \eta_{11}\|\x_1\|^{\mu-1}+\eta_{12}\|\x_2\|^{\mu-1}.\notag
\end{align}
We emphasize that, for the development of the Lyapunov--Krasovskii approach, differentiability of $f$ with respect to $\x_1,$ instead of that with respect to $\x_2$ in the Lyapunov--Razumikhin approach \cite{aleksandrov2012asymptotic,aleksandrov2016asymptotic}, is required.

Throughout the paper, we assume that the delay free system \eqref{eq:delay_free_sys} is asymptotically stable. It is known \cite{zubov1964methods,rosier1992homogeneous} that there exists a positive definite and twice continuously differentiable Lyapunov function $V(\x),$ which is homogeneous of degree $\gamma\geq 2$ and satisfies together with its derivatives the following inequalities
\begin{gather}
    \left(\frac{\partial V(\x)}{\partial \x}\right)^T f(\x,\x)\leq-\mathrm{w}\|\x\|^{\gamma+\mu-1}, \quad \mathrm{w}>0,\notag\\
    \alpha_0\|\x\|^\gamma\leq V(\x)\leq \alpha_1\|\x\|^\gamma,\label{eq:bound_V}\\
\left\|\frac{\partial V(\x)}{\partial \x}\right\|\leq \beta\|\x\|^{\gamma-1},\ \ \left\|\frac{\partial^2 V(\x)}{\partial \x^2}\right\|\leq \psi\|\x\|^{\gamma-2},\notag
\end{gather}
where $\alpha_0,\,\alpha_1,\,\beta,\,\psi>0.$ Given $\alpha>1,$ introduce the set
$$
S_\alpha = \Bigl\{\varphi\in C_{[-h,0]}\Bigl\arrowvert\|\varphi(\theta)\|\leq \alpha\|\varphi(0)\|,\; \theta\in[-h,0]\Bigr\}.
$$
It was shown in \cite{alexandrova2018junction} that to analyse the stability it is enough to compute a lower bound for the Lyapunov--Krasovskii functional on the set $S_\alpha$, and to ensure the negativity of the time derivative along the solutions of system \eqref{eq:system_zero_mean_value}. The set $S_\alpha$ also plays an important role in the construction of the estimates for solutions in Section~IV.

The following constants will be used in the sequel:
\begin{gather*}
m=m_1+m_2,\quad \eta=\eta_{11}+\eta_{12},\quad p=p_1+p_2,\\ q=q_{11}+q_{12},\quad
\kappa_1=\psi m_2 +\beta\eta_{12},\quad \kappa_2=\psi p_2 +\beta q_{12},\\ L_1=\psi m +\beta\eta,\quad L_2=\psi p +\beta q,\\ L_3=L_2 + \beta(q_{21}+q_{22}).
\end{gather*}
\section{Construction of the functional}
In this section, we construct a Lyapunov--Krasovskii functional for system \eqref{eq:system_zero_mean_value} with perturbations either of class $(a)$ or $(b).$ Following \cite{aleksandrov2012asymptotic},  we introduce the integral
\begin{equation}\label{eq:L_varepsilon}
    L(t,\varepsilon)=\int_0^{t+h} e^{-\varepsilon(t+h-s)}B(s)ds,
\end{equation}
where $\eps=0$ in case $(a)$, and $\eps>0$ in case $(b).$
It is known \cite{fink2006almost} that in case $(b)$ integral \eqref{eq:L_varepsilon} satisfies the following property: there exists a function $\omega(\varepsilon)$ such that $\omega(\varepsilon)\rightarrow 0$ as $\varepsilon\rightarrow 0,$ and
\begin{equation*}
    \eps\|L(t,\varepsilon)\|\leq \omega(\varepsilon)
\end{equation*}
for all $t\geq 0.$ In case $(a)$, there is $l_0>0$ such that $\|L(t,0)\|\leq l_0,$ thus we set $\omega(\varepsilon)=l_0 \eps.$\\
Inspired by the Lyapunov function for system \eqref{eq:system_zero_mean_value} presented in \cite{aleksandrov2012asymptotic} and using the construction of Lyapunov-Krasovskii functionals for homogeneous systems introduced in \cite{Voronezh,alexandrova2019lyapunov}, we propose the following  functional for system \eqref{eq:system_zero_mean_value}:
\begin{align}\label{eq:functional_zero_mean_2}
&v(t,\varphi)=V(\varphi(0))+\left.\left(\frac{\partial V(\x)}{\partial \x}\right)^T \right|_{\x=\varphi(0)}\\
&\times \Biggl(\int_{-h}^{0}\Bigl(f(\varphi(0),\varphi(\theta))+B(t+\theta+h)Q(\varphi(0),\varphi(\theta))\Bigr)d\theta\notag\\&-L(t,\varepsilon)Q(\varphi(0),\varphi(0))\Biggr)\notag\\&+\int_{-h}^{0}(\mathrm{w_1}+(h+\theta)\mathrm{w_2})\|\varphi(\theta)\|^{\gamma+\mu-1}d\theta.\notag
\end{align}
Here, $\mathrm{w}_1, \mathrm{w}_2>0$ are such that \mbox{$\mathrm{w}_0=\mathrm{w}-\mathrm{w}_1-h\mathrm{w}_2>0$.} Notice that functional \eqref{eq:functional_zero_mean_2} does not coincide with the construction from \cite{alexandrova2019lyapunov} applied with the Lyapunov function of \cite{aleksandrov2012asymptotic}.
We prove that the Lyapunov--Krasovskii functional \eqref{eq:functional_zero_mean_2} is suitable for the stability analysis of system \eqref{eq:system_zero_mean_value} below.
\begin{lemma} \label{lemma_bound_dot_v_pertur_zero_mean}
There exist $\delta>0$ and $c_0,c_1,c_2>0$ such that the time derivative of functional \eqref{eq:functional_zero_mean_2} along the solutions of system \eqref{eq:system_zero_mean_value}, admits a bound of the form
\begin{gather}\label{eq:bound_dot_v_pertur_zero_mean}
\frac{dv(t,x_t)}{dt}\leq -c_0\|x(t)\|^{\gamma+\mu-1}-c_1\|x(t-h)\|^{\gamma+\mu-1}\\-c_2\int_{-h}^{0}\|x(t+\theta)\|^{\gamma+\mu-1}d\theta, \notag
\end{gather}
in the neighbourhood $\|x_t\|_h\leq \delta.$
\end{lemma}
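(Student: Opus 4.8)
The plan is to differentiate $v(t,x_t)$ along the solutions, exhibit the two structural cancellations that the functional is engineered to produce, read off the negative definite leading terms, and absorb everything else as higher order in a small ball. To organize the computation I first pass to the shifted variable $s=t+\theta$, writing the cross term as $g(x(t))^T R(t)$ with $g(\x)=\partial V/\partial\x$ and
\[
R(t)=\int_{t-h}^{t}\bigl(f(x(t),x(s))+B(s+h)Q(x(t),x(s))\bigr)ds-L(t,\eps)Q(x(t),x(t)).
\]
Differentiating $V(x(t))$ gives $g(x(t))^T\bigl(f(x(t),x(t-h))+B(t)Q(x(t),x(t-h))\bigr)$, i.e.\ exactly $g(x(t))^T$ times the lower boundary term produced by the Leibniz rule applied to the integral in $R$. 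The two ingredients I need are therefore the Leibniz rule for $\frac{d}{dt}\int_{t-h}^t(\cdots)ds$ and the identity $\frac{d}{dt}L(t,\eps)=B(t+h)-\eps L(t,\eps)$, which follows directly from \eqref{eq:L_varepsilon}.

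With these in hand, two cancellations occur. First, the lower boundary term $-g(x(t))^T\bigl(f(x(t),x(t-h))+B(t)Q(x(t),x(t-h))\bigr)$ from the Leibniz rule cancels $\frac{d}{dt}V(x(t))$. Second, the upper boundary term contributes $g(x(t))^T B(t+h)Q(x(t),x(t))$, which is annihilated by the $-g(x(t))^T B(t+h)Q(x(t),x(t))$ coming from $\frac{d}{dt}L$; what survives from the $L$-block is $+\eps\,g(x(t))^TL(t,\eps)Q(x(t),x(t))$ together with $-g(x(t))^TL(t,\eps)\frac{d}{dt}Q(x(t),x(t))$. After these cancellations the only undifferentiated boundary contribution left is $g(x(t))^T f(x(t),x(t))\le -\w\|x(t)\|^{\gamma+\mu-1}$ by \eqref{eq:bound_V}. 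Differentiating the last integral of \eqref{eq:functional_zero_mean_2} by the Leibniz rule yields $(\w_1+h\w_2)\|x(t)\|^{\gamma+\mu-1}-\w_1\|x(t-h)\|^{\gamma+\mu-1}-\w_2\int_{-h}^{0}\|x(t+\theta)\|^{\gamma+\mu-1}d\theta$. Collecting the clean terms produces precisely $-\w_0\|x(t)\|^{\gamma+\mu-1}-\w_1\|x(t-h)\|^{\gamma+\mu-1}-\w_2\int_{-h}^{0}\|x(t+\theta)\|^{\gamma+\mu-1}d\theta$, with $\w_0=\w-\w_1-h\w_2>0$.

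It remains to show that the residual terms --- namely $g(x(t))^T\int_{t-h}^t\bigl(\frac{\partial f}{\partial\x_1}+B(s+h)\frac{\partial Q}{\partial\x_1}\bigr)\dot x(t)\,ds$, the term $-g(x(t))^TL(t,\eps)\frac{d}{dt}Q(x(t),x(t))$, the term $\eps\,g(x(t))^TL(t,\eps)Q(x(t),x(t))$, and $\bigl(\frac{\partial^2 V}{\partial\x^2}\dot x(t)\bigr)^T R(t)$ from differentiating $g(x(t))^T$ --- are dominated by the leading terms near the origin. Using \eqref{eq:bound_Q}, \eqref{eq:bound_f}, \eqref{eq:bound_V}, the bound $\|\dot x(t)\|\le m_1\|x(t)\|^\mu+m_2\|x(t-h)\|^\mu+\hat{b}(p_1\|x(t)\|^\sigma+p_2\|x(t-h)\|^\sigma)$, and for $\eps>0$ the boundedness $\|L(t,\eps)\|\le\omega(\eps)/\eps$, each residual monomial has total homogeneity degree at least $\gamma+\mu-1$; its worst component is the perturbation-by-perturbation one of degree $\gamma+2\sigma-2$, which is \emph{strictly} larger than $\gamma+\mu-1$ exactly under the class-$(a)$ hypothesis $\sigma>\frac{\mu+1}{2}$. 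The constants $\kappa_1,\kappa_2,L_1,L_2,L_3$ are the coefficients that arise when these bounds are grouped according to whether they attach to $\|x(t)\|$, to $\|x(t-h)\|$, or to the integral. Writing every strictly-higher-degree monomial as $\delta^{\rho}$ (with $\rho>0$ its excess degree) times a degree-$(\gamma+\mu-1)$ monomial, splitting the latter by Young's inequality into $\|x(t)\|^{\gamma+\mu-1}$, $\|x(t-h)\|^{\gamma+\mu-1}$ and $\int_{-h}^0\|x(t+\theta)\|^{\gamma+\mu-1}d\theta$, and letting $\rho_0>0$ be the smallest excess degree, bounds the whole residual for $\delta\le 1$ by $C\delta^{\rho_0}\bigl(\|x(t)\|^{\gamma+\mu-1}+\|x(t-h)\|^{\gamma+\mu-1}+\int_{-h}^0\|x(t+\theta)\|^{\gamma+\mu-1}d\theta\bigr)$ on $\|x_t\|_h\le\delta$.

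The one term that is not automatically higher order is $\eps\,g(x(t))^TL(t,\eps)Q(x(t),x(t))$, whose degree $\gamma+\sigma-1$ equals $\gamma+\mu-1$ in the critical class-$(b)$ case $\sigma=\mu$; here it is bounded by $\beta p\,\omega(\eps)\|x(t)\|^{\gamma+\mu-1}$. This fixes the order in which the free parameters must be chosen, and it is the main obstacle of the proof. In case $(a)$ one takes $\eps=0$, so this term vanishes identically and only $\sigma>\frac{\mu+1}{2}$ is needed; in case $(b)$ one first uses $\omega(\eps)\to0$ to pick $\eps>0$ so small that $\beta p\,\omega(\eps)<\w_0$, and only afterwards chooses $\delta>0$ small enough that $C\delta^{\rho_0}$ is below $\min\{\w_0-\beta p\,\omega(\eps),\w_1,\w_2\}$. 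With these choices the desired estimate holds with $c_0=\w_0-\beta p\,\omega(\eps)-C\delta^{\rho_0}$, $c_1=\w_1-C\delta^{\rho_0}$ and $c_2=\w_2-C\delta^{\rho_0}$, all positive.
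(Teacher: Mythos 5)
Your proposal is correct and follows essentially the same route as the paper's proof: you differentiate the functional, exhibit the cancellations it is engineered to produce (which the paper leaves implicit, directly writing the derivative as the negative leading terms plus residuals $\Lambda_1,\dots,\Lambda_5$), and then absorb the residuals by the same homogeneity-degree counting, with the same key observations --- the binding constraint $\gamma+2\sigma-2>\gamma+\mu-1$, the special role of the $\eps\,L(t,\eps)Q$ term whose coefficient $\omega(\eps)$ must be made small before $\delta$ is chosen, and the recovery of case $(a)$ via $\eps=0$. The only differences are presentational: you merge the paper's $\Lambda_4$ and $\Lambda_5$ into one Hessian term and lump the absorbed coefficients into a single $C\delta^{\rho_0}$ rather than tracking them explicitly as the paper does in its formulas for $c_0,c_1,c_2$.
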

\begin{proof}
Differentiating each of the three summands of \eqref{eq:functional_zero_mean_2} along the solutions of \eqref{eq:system_zero_mean_value}, we obtain

\begin{gather*}
\frac{dv(t,x_t)}{dt} = -\mathrm{w}_0\|x(t)\|^{\gamma+\mu-1}-\mathrm{w}_1\|x(t-h)\|^{\gamma+\mu-1}\\-\mathrm{w}_2\int_{-h}^{0}\|x(t+\theta)\|^{\gamma+\mu-1}d\theta+\sum_{j=1}^{5}\Lambda_j,
\end{gather*}
where
\begin{align*}
\Lambda_1 &= \left(\frac{\partial V(\mathrm{x})}{\partial \mathrm{x}}\right)^T\varepsilon L(t,\varepsilon)Q(\mathrm{x},\mathrm{x})\biggl|_{\x=x(t)},\\
\Lambda_2 &= \left(\frac{\partial V(\mathrm{x})}{\partial \mathrm{x}}\right)^T \\&\times\int_{t-h}^{t}\left[\frac{\partial f(\mathrm{x},x(s))}{\partial \mathrm{x}}+B(s+h)\frac{\partial Q(\x,x(s))}{\partial \x}\right]ds \Biggl|_{\x=x(t)}\\
&\times\Bigl(f(x(t),x(t-h))+B(t)Q(x(t),x(t-h))\Bigr),\\
\Lambda_3 &= - \left(\frac{\partial V(\mathrm{x}_1)}{\partial \mathrm{x}_1}\right)^T L(t,\varepsilon)\sum_{j=1}^2\frac{\partial Q(\x_1,\x_2)}{\partial \x_j}\Biggl|_{\x_1=\x_2=x(t)}
\\&\times\Bigl(f(x(t),x(t-h))+B(t)Q(x(t),x(t-h))\Bigr),\\
\Lambda_4 &= \Bigl(f(x(t),x(t-h))+B(t)Q(x(t),x(t-h))\Bigr)^T\\
&\times\left(\frac{\partial^2 V(\x)}{\partial \x^2}\right)\biggl|_{\x=x(t)} \int_{-h}^{0}\Bigl(f(x(t),x(t+\theta))\\ &+B(t+\theta+h)Q(x(t),x(t+\theta))\Bigr)d\theta,
\\
\Lambda_5 &= -\Bigl(f(x(t),x(t-h))+B(t)Q(x(t),x(t-h))\Bigr)^T\\&\times\left(\frac{\partial^2 V(\x)}{\partial \x^2}\right)\biggl|_{\x=x(t)} L(t,\varepsilon)Q(x(t),x(t)).
\end{align*}
Next, in concordance with \eqref{eq:bound_Q}, \eqref{eq:bound_f}, \eqref{eq:bound_V} and using the inequality
\begin{gather*}
\gamma_1^{l_1}\gamma_2^{l_2}\gamma_3^{l_3}\leq \gamma_1^{l_1+l_2+l_3}+\gamma_2^{l_1+l_2+l_3}+\gamma_3^{l_1+l_2+l_3},\\
\gamma_1,\gamma_2,\gamma_3\geq0,\quad l_1,l_2,l_3>0,
\end{gather*}
we estimate the terms $\Lambda_j$ separately:
\begin{align*}
|\Lambda_1|&\leq \beta p \omega(\eps) \|x(t)\|^{\gamma+\sigma-1},\\
|\Lambda_2|&\leq \beta m \eta h \|x(t)\|^{\gamma+2\mu-2} + \beta m_2 \eta h \|x(t-h)\|^{\gamma+2\mu-2} \\&+ \beta m \eta_{12}\int_{-h}^0\|x(t+\theta)\|^{\gamma+2\mu-2} d\theta\\
&+ \beta \hat{b} h (q m + p \eta) \|x(t)\|^{\gamma+\mu+\sigma-2} \\&+ \beta \hat{b} h (q m_2 + p_2 \eta)\|x(t-h)\|^{\gamma+\mu+\sigma-2} \\ &+ \beta \hat{b} (q_{12} m + p \eta_{12}) \int_{-h}^0\|x(t+\theta)\|^{\gamma+\mu+\sigma-2} d\theta\\ &+ \beta \hat{b}^2 p q h\|x(t)\|^{\gamma+2\sigma-2} +  \beta \hat{b}^2 p_2 q h \|x(t-h)\|^{\gamma+2\sigma-2}\\&+
\beta \hat{b}^2 p q_{12}
\int_{-h}^0\|x(t+\theta)\|^{\gamma+2\sigma-2} d\theta,
\end{align*}
\begin{align*}
|\Lambda_4|&\leq \psi m^2 h \|x(t)\|^{\gamma+2\mu-2} + \psi m m_2 h \|x(t-h)\|^{\gamma+2\mu-2} 
\\
&+ \psi m m_2\int_{-h}^0\|x(t+\theta)\|^{\gamma+2\mu-2} d\theta\\
&+ 2\hat{b}\psi m p h \|x(t)\|^{\gamma+\mu+\sigma-2} \\&+ \hat{b}\psi (m p_2+m_2 p) h\|x(t-h)\|^{\gamma+\mu+\sigma-2}\\ &+ \hat{b}\psi (m p_2+m_2 p) \int_{-h}^0\|x(t+\theta)\|^{\gamma+\mu+\sigma-2} d\theta\\ &+ \psi \hat{b}^2 p^2 h\|x(t)\|^{\gamma+2\sigma-2} + \psi \hat{b}^2 p p_2 h \|x(t-h)\|^{\gamma+2\sigma-2}\\&+
\psi \hat{b}^2 p p_2 \int_{-h}^0\|x(t+\theta)\|^{\gamma+2\sigma-2} d\theta,\\
|\Lambda_3&+\Lambda_5|\leq L_3\dfrac{\omega(\eps)}{\eps}\Bigl(m \|x(t)\|^{\gamma+\mu+\sigma-2} \\&+ \hat{b} p \|x(t)\|^{\gamma+2\sigma-2} +m_2 \|x(t-h)\|^{\gamma+\mu+\sigma-2} \\&+ \hat{b} p_2 \|x(t-h)\|^{\gamma+2\sigma-2}\Bigr).
\end{align*}
Notice that $\gamma+2\mu-2>\gamma+\mu-1$ due to $\mu>1.$ Moreover, $\sigma>1$ leads to 
\begin{align*}
\gamma+\mu+\sigma-2&>\gamma+\mu-1,\\
\gamma+2\sigma-2&>\gamma+\mu-1
\end{align*}
both in cases $(a)$ and $(b)$. The term $\Lambda_1$ is absent in case $(a)$ when $\eps=0.$ In case $(b),$ the degree $\gamma+\sigma-1\geq \gamma+\mu-1$ whilst the coefficient $\omega(\eps)$ can be done arbitrarily small by the choice of $\eps.$ Hence, bound \eqref{eq:bound_dot_v_pertur_zero_mean} holds in a neighbourhood $\|x_t\|_h\leq \delta$ with
\begin{align*}
c_0&=\w_0-\beta p \omega(\eps) \delta^{\sigma-\mu}- m h L_1\delta^{\mu-1} \\&- \left(\hat{b}h(p L_1+ m L_2)+m L_3\dfrac{\omega(\eps)}{\eps}\right)\delta^{\sigma-1} \\ &- p \hat{b}\left(\hat{b} h L_2+L_3\dfrac{\omega(\eps)}{\eps} \right) \delta^{2\sigma-\mu-1},\\
c_1&=\w_1- m_2 h L_1\delta^{\mu-1} \\&- \left(\hat{b}h(p_2 L_1+m_2 L_2)+m_2 L_3\dfrac{\omega(\eps)}{\eps}\right)\delta^{\sigma-1}
\\&-p_2 \hat{b}\left(\hat{b} h L_2+L_3\dfrac{\omega(\eps)}{\eps} \right)\delta^{2\sigma-\mu-1},
\\
c_2&=\w_2- m\kappa_1\delta^{\mu-1} -\hat{b}(p\kappa_1+m\kappa_2) \delta^{\sigma-1} \\&- p \hat{b}^2 \kappa_2 \delta^{2\sigma-\mu-1}.    
\end{align*}
Here, $\delta>0$ is chosen in such a way that the constants $c_0,c_1,c_2$ are positive. Notice that if we replace $\omega(\eps)$ with zero and $\omega(\eps)/\eps$ with $l_0$ in the above expressions, we recover case $(a)$.
\end{proof}
\begin{lemma}\label{lemma_lower_bound_functional_pertur_zero_mean_2}
There exist $\delta>0$ and $a_1(\alpha)>0$ such that functional \eqref{eq:functional_zero_mean_2} admits on the set $S_\alpha$ a lower bound of the form
\begin{equation}\label{eq:firs_lower_bound_v_pertur_zero_mean_2}
    v(t,\varphi)\geq a_1(\alpha)\|\varphi(0)\|^\gamma+\mathrm{w}_1\int_{-h}^{0}\|\varphi(\theta)\|^{\gamma+\mu-1} d\theta,
\end{equation}
where $\|\varphi\|_h\leq \delta.$ 
\end{lemma}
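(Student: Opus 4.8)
The plan is to bound the three summands of \eqref{eq:functional_zero_mean_2} separately on $S_\alpha.$ The first summand is handled directly by the lower estimate in \eqref{eq:bound_V}, giving $V(\varphi(0))\geq \alpha_0\|\varphi(0)\|^\gamma.$ For the third summand, since $(h+\theta)\mathrm{w}_2\geq 0$ for $\theta\in[-h,0],$ I would simply discard this nonnegative part and retain $\mathrm{w}_1\int_{-h}^{0}\|\varphi(\theta)\|^{\gamma+\mu-1}\,d\theta,$ which is precisely the integral term appearing in \eqref{eq:firs_lower_bound_v_pertur_zero_mean_2}. Thus the entire burden of the argument falls on the middle (cross) term, call it $W.$

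First I would bound $W$ in absolute value via the triangle inequality together with the gradient estimate $\|\partial V/\partial\x\|\leq \beta\|\x\|^{\gamma-1}$ from \eqref{eq:bound_V}, the bounds \eqref{eq:bound_f} and \eqref{eq:bound_Q} on $f$ and $Q,$ the estimate $\|B(t)\|\leq\hat b,$ and $\|L(t,\varepsilon)\|\leq\omega(\varepsilon)/\varepsilon.$ The key step is then to invoke the defining property of $S_\alpha,$ namely $\|\varphi(\theta)\|\leq\alpha\|\varphi(0)\|$ for all $\theta\in[-h,0],$ which converts every occurrence of $\|\varphi(\theta)\|$ into a multiple of $\|\varphi(0)\|.$ After integrating over $[-h,0],$ this yields a bound of the form
\begin{equation*}
|W|\leq c_\mu(\alpha)\,\|\varphi(0)\|^{\gamma+\mu-1}+c_\sigma(\alpha)\,\|\varphi(0)\|^{\gamma+\sigma-1},
\end{equation*}
where $c_\mu(\alpha)=\beta h(m_1+m_2\alpha^\mu)$ collects the $f$-contributions and $c_\sigma(\alpha)=\beta\bigl(h\hat b(p_1+p_2\alpha^\sigma)+p\,\omega(\varepsilon)/\varepsilon\bigr)$ collects the perturbation contributions (with $\omega(\varepsilon)/\varepsilon$ replaced by $l_0$ in case $(a)$).

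The decisive observation is that both exponents strictly exceed $\gamma$: since $\mu>1$ we have $\gamma+\mu-1>\gamma,$ and since $\sigma>1$ in both cases $(a)$ and $(b)$ we have $\gamma+\sigma-1>\gamma.$ Consequently, for $\|\varphi\|_h\leq\delta$ I can factor out $\|\varphi(0)\|^\gamma$ and write $|W|\leq\bigl(c_\mu(\alpha)\delta^{\mu-1}+c_\sigma(\alpha)\delta^{\sigma-1}\bigr)\|\varphi(0)\|^\gamma.$ Combining the three summands through $v(t,\varphi)\geq V(\varphi(0))-|W|+\mathrm{w}_1\int_{-h}^{0}\|\varphi(\theta)\|^{\gamma+\mu-1}\,d\theta$ then gives \eqref{eq:firs_lower_bound_v_pertur_zero_mean_2} with
\begin{equation*}
a_1(\alpha)=\alpha_0-c_\mu(\alpha)\delta^{\mu-1}-c_\sigma(\alpha)\delta^{\sigma-1}.
\end{equation*}
The final step is to choose $\delta>0$ small enough that $a_1(\alpha)>0,$ which is always possible because the two correction terms vanish as $\delta\to 0,$ the exponents $\mu-1$ and $\sigma-1$ being strictly positive. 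The main (and essentially only) obstacle is securing the strict inequalities $\gamma+\mu-1>\gamma$ and $\gamma+\sigma-1>\gamma$ — that is, that the cross term is genuinely of higher order in $\|\varphi(0)\|$ than $V(\varphi(0))$ — which is exactly what the homogeneity degree $\mu>1$ and the standing hypothesis $\sigma>1$ guarantee, and what permits $W$ to be absorbed into $\alpha_0\|\varphi(0)\|^\gamma$ for $\delta$ sufficiently small.
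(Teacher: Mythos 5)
Your proposal is correct and follows essentially the same route as the paper's own proof: lower-bound $V(\varphi(0))$ by $\alpha_0\|\varphi(0)\|^\gamma$, keep the $\mathrm{w}_1$ part of the integral term, bound the cross term on $S_\alpha$ using \eqref{eq:bound_V}, \eqref{eq:bound_f}, \eqref{eq:bound_Q} and $\|L(t,\varepsilon)\|\leq\omega(\varepsilon)/\varepsilon$, and absorb it into $\alpha_0\|\varphi(0)\|^\gamma$ for small $\delta$. Your constants $c_\mu(\alpha)$ and $c_\sigma(\alpha)$ reproduce exactly the paper's expression $a_1(\alpha)=\alpha_0-\beta h (m_1+m_2\alpha^\mu)\delta^{\mu-1}-\bigl(\beta \hat{b} h (p_1+p_2\alpha^\sigma)+\beta p\,\omega(\varepsilon)/\varepsilon\bigr)\delta^{\sigma-1}$.
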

\begin{proof}
Using bounds \eqref{eq:bound_Q}, \eqref{eq:bound_f} and \eqref{eq:bound_V}, we estimate the second summand of functional \eqref{eq:functional_zero_mean_2}. First,
\begin{align*}
\Biggl|\left(\frac{\partial V(\x)}{\partial \x}\right)^T &\biggr|_{\x=\varphi(0)} L(t,\varepsilon)Q(\varphi(0),\varphi(0))\Biggr|\\ &\leq \beta p\dfrac{\omega(\eps)}{\eps}\|\varphi(0)\|^{\gamma+\sigma-1}.    
\end{align*}
Second, the remaining term is estimated on the set $S_\alpha:$
\begin{align*}
&\Biggl|\left(\frac{\partial V(\x)}{\partial \x}\right)^T \biggr|_{\x=\varphi(0)} \int_{-h}^{0}\Bigl(f(\varphi(0),\varphi(\theta))+B(t+\theta+h)\\ &\times Q(\varphi(0),\varphi(\theta))\Bigr)d\theta\Biggr|
\leq
\beta h (m_1+m_2\alpha^\mu)\|\varphi(0)\|^{\gamma+\mu-1}
\\
&+ \beta \hat{b} h (p_1+p_2\alpha^\sigma)\|\varphi(0)\|^{\gamma+\sigma-1},\quad \ph\in S_\alpha.
\end{align*}
Hence, the required bound \eqref{eq:firs_lower_bound_v_pertur_zero_mean_2} holds with
\begin{align*}
a_1(\alpha)&=\alpha_0-\beta h (m_1+m_2\alpha^\mu) \delta^{\mu-1} \\&- \left(\beta \hat{b} h (p_1+p_2\alpha^\sigma)+\beta p\dfrac{\omega(\eps)}{\eps}\right)\delta^{\sigma-1},
\end{align*}
where $\delta>0$ is chosen in such a way that $a_1(\alpha)>0.$
\end{proof}
\begin{lemma}\label{lemma_upper_bound}
Functional \eqref{eq:functional_zero_mean_2} admits the upper bound of the form
\begin{equation}\label{eq:upper_bound_1}
    v(t,\varphi)\leq b_0\|\varphi(0)\|^\gamma+b_1\int_{-h}^{0}\|\varphi(\theta)\|^{\gamma} d\theta,\quad b_0,b_1>0,
\end{equation}
in the neighbourhood $\|\varphi\|_h\leq \delta.$ Moreover, there exist $b_2,b_3>0$ such that
\begin{equation} \label{eq:upper_bound_2}
v(t,\varphi)\leq \alpha_1 \|\varphi(0)\|^\gamma + b_2 \|\varphi\|_h^{\gamma+\mu-1} + b_3 \|\varphi\|_h^{\gamma+\sigma-1}.
\end{equation}
\end{lemma}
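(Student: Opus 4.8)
The plan is to bound the three summands of \eqref{eq:functional_zero_mean_2} separately and then reassemble them. The first summand is controlled immediately by the upper estimate in \eqref{eq:bound_V}, namely $V(\varphi(0))\leq\alpha_1\|\varphi(0)\|^\gamma$. For the second summand I would factor out $\|\partial V/\partial\x\|\leq\beta\|\varphi(0)\|^{\gamma-1}$ and bound its three constituents using \eqref{eq:bound_f}, \eqref{eq:bound_Q}, the bound $\|B\|\leq\hat b$, and the estimate of the $L\cdot Q$ term already obtained in the proof of Lemma~\ref{lemma_lower_bound_functional_pertur_zero_mean_2}. This produces the pure monomials $\|\varphi(0)\|^{\gamma+\mu-1}$ and $\|\varphi(0)\|^{\gamma+\sigma-1}$, together with the two mixed integrals $\|\varphi(0)\|^{\gamma-1}\int_{-h}^0\|\varphi(\theta)\|^{\mu}d\theta$ and $\|\varphi(0)\|^{\gamma-1}\int_{-h}^0\|\varphi(\theta)\|^{\sigma}d\theta$. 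The third summand is bounded through $h+\theta\leq h$ on $[-h,0]$ by $(\w_1+h\w_2)\int_{-h}^0\|\varphi(\theta)\|^{\gamma+\mu-1}d\theta$.

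Bound \eqref{eq:upper_bound_2} then follows with no restriction on $\|\varphi\|_h$: I would replace every $\|\varphi(0)\|$ and $\|\varphi(\theta)\|$ appearing outside the $V$ term by $\|\varphi\|_h$, so that each estimate becomes a monomial in $\|\varphi\|_h$, and then group them by homogeneity degree into the $\alpha_1\|\varphi(0)\|^\gamma$, $\|\varphi\|_h^{\gamma+\mu-1}$ and $\|\varphi\|_h^{\gamma+\sigma-1}$ slots. Explicitly one may take $b_2=\beta m h+(\w_1+h\w_2)h$ and $b_3=\beta\hat b p h+\beta p\,\omega(\eps)/\eps$, where $m=m_1+m_2$ and $p=p_1+p_2$.

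For bound \eqref{eq:upper_bound_1} I would instead work inside $\|\varphi\|_h\leq\delta$ and collapse every term to homogeneity degree $\gamma$. Since $\mu,\sigma>1$, the pure $\|\varphi(0)\|$ monomials satisfy $\|\varphi(0)\|^{\gamma+\mu-1}\leq\delta^{\mu-1}\|\varphi(0)\|^\gamma$ and $\|\varphi(0)\|^{\gamma+\sigma-1}\leq\delta^{\sigma-1}\|\varphi(0)\|^\gamma$, feeding the coefficient $b_0$, while the integral term coming from the third summand satisfies $\leq(\w_1+h\w_2)\delta^{\mu-1}\int_{-h}^0\|\varphi(\theta)\|^\gamma d\theta$, feeding $b_1$. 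Note that, unlike \eqref{eq:upper_bound_2}, this bound genuinely requires the neighbourhood, since the left-hand terms have degree strictly larger than $\gamma$ and cannot be dominated globally by degree-$\gamma$ quantities.

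The only delicate point, where I expect the real bookkeeping to lie, is the two mixed integrals, whose total homogeneity degree $\gamma+\mu-1$ (respectively $\gamma+\sigma-1$) exceeds $\gamma$. For each I would first lower the inner exponent via $\|\varphi(\theta)\|^{\mu}\leq\delta^{\mu-1}\|\varphi(\theta)\|$ (respectively with $\sigma$), and then split the resulting product pointwise under the integral by Young's inequality $\|\varphi(0)\|^{\gamma-1}\|\varphi(\theta)\|\leq\frac{\gamma-1}{\gamma}\|\varphi(0)\|^\gamma+\frac{1}{\gamma}\|\varphi(\theta)\|^\gamma$, which is legitimate because $\gamma/(\gamma-1)$ and $\gamma$ are conjugate exponents for $\gamma\geq2$. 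Integrating in $\theta$ distributes each mixed integral between a multiple of $\|\varphi(0)\|^\gamma$ and a multiple of $\int_{-h}^0\|\varphi(\theta)\|^\gamma d\theta$; collecting all contributions then yields \eqref{eq:upper_bound_1} with explicit positive constants $b_0,b_1$, which completes the proof.
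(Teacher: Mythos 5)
Your proposal is correct and follows essentially the same route as the paper's (very terse) proof: estimate the three summands separately via \eqref{eq:bound_V}, \eqref{eq:bound_f}, \eqref{eq:bound_Q} and the bound on the $L(t,\eps)Q$ term from Lemma~\ref{lemma_lower_bound_functional_pertur_zero_mean_2}, obtain \eqref{eq:upper_bound_2} by replacing all norms outside $V$ with $\|\varphi\|_h$ (your $b_2,b_3$ coincide exactly with the paper's), and obtain \eqref{eq:upper_bound_1} by degree reduction inside $\|\varphi\|_h\leq\delta$. The only difference is in how the mixed terms $\|\varphi(0)\|^{\gamma-1}\int_{-h}^0\|\varphi(\theta)\|^{\mu}d\theta$ (and their $\sigma$ counterparts) are split: the paper implicitly uses its elementary inequality $\gamma_1^{l_1}\gamma_2^{l_2}\leq\gamma_1^{l_1+l_2}+\gamma_2^{l_1+l_2}$, already invoked in Lemma~\ref{lemma_bound_dot_v_pertur_zero_mean}, which is what produces its stated constants $b_0,b_1$, whereas you use Young's inequality with conjugate exponents $\gamma/(\gamma-1)$ and $\gamma$ (legitimate since $\gamma\geq 2$), yielding slightly different, in fact marginally sharper, but equally valid constants.
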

\begin{proof} Making the estimations in the same way as in Lemma~\ref{lemma_lower_bound_functional_pertur_zero_mean_2}, we arrive at the required bounds with
\begin{align*}
    b_0 &= \alpha_1+\beta m h \delta^{\mu-1} + \beta p\left(\hat{b} h + \dfrac{\omega(\eps)}{\eps}\right) \delta^{\sigma-1},\\
    b_1 &= (\beta m_2 + \w_1 + h \w_2)\delta^{\mu-1} + \beta p_2 \hat{b} \delta^{\sigma-1},\\
    b_2 &= (\beta m + \w_1 + h \w_2)h,\quad
    b_3 = \beta p \left(\hat{b}h + \dfrac{\omega(\eps)}{\eps}\right).
\end{align*}
\end{proof}
Lemmas~\ref{lemma_bound_dot_v_pertur_zero_mean}--\ref{lemma_upper_bound} imply that functional \eqref{eq:functional_zero_mean_2} allows verifying Theorem~\ref{thm_Aleksandrov} \cite{alexandrova2018junction}, with slightly different assumptions on the right-hand sides.
\section{Estimates}
\subsection{Estimates for the attraction region}
Based on the classical ideas \cite{Krasovskii,melchor2007estimates} and the bounds presented in Lemmas~\ref{lemma_bound_dot_v_pertur_zero_mean}--\ref{lemma_upper_bound}, we obtain the following estimate for the attraction region of the trivial solution of system (\ref{eq:system_zero_mean_value}), see \cite{alexandrova2019lyapunov} for the use of the set $S_\alpha$ in this context.
\begin{theorem}\label{thm:attr_region_LK}
Let system \eqref{eq:delay_free_sys} be asymptotically stable and
$\Delta$ be a positive root of equation
\begin{equation}
\label{eq:attraction_LK}
    \alpha_1\Delta^\gamma +b_2 \Delta^{\gamma+\mu-1}+b_3 \Delta^{\gamma+\sigma-1}=a_1(\alpha)\delta^\gamma,
\end{equation}
where $\delta$ is defined in Lemmas~\ref{lemma_bound_dot_v_pertur_zero_mean}--\ref{lemma_lower_bound_functional_pertur_zero_mean_2}. Then, the set of initial functions $\|\varphi\|_h<\Delta$
is the estimate of the attraction region of the trivial solution of \eqref{eq:system_zero_mean_value}.
\end{theorem}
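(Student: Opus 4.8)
The plan is to run the classical Lyapunov--Krasovskii sublevel-set argument, combining the three bounds of Lemmas~\ref{lemma_bound_dot_v_pertur_zero_mean}--\ref{lemma_upper_bound} to show that the ball $\|\varphi\|_h<\Delta$ feeds into a forward-invariant region on which the functional strictly decreases, and then to deduce convergence. First I would record that $\Delta<\delta$: the left-hand side of \eqref{eq:attraction_LK} is continuous and strictly increasing in $\Delta$, starting from $0$, while from the proof of Lemma~\ref{lemma_lower_bound_functional_pertur_zero_mean_2} we have $a_1(\alpha)<\alpha_0\le\alpha_1$, so at $\Delta=\delta$ the left-hand side already exceeds $\alpha_1\delta^\gamma>a_1(\alpha)\delta^\gamma$; hence its positive root satisfies $\Delta<\delta$, and any initial function with $\|\varphi\|_h<\Delta$ starts strictly inside the neighbourhood $\|\varphi\|_h\le\delta$ where all the lemmas apply. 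Using the upper bound \eqref{eq:upper_bound_2} together with $\|\varphi(0)\|\le\|\varphi\|_h<\Delta$ and the defining equation \eqref{eq:attraction_LK} then gives
\[
v(0,\varphi)\le \alpha_1\Delta^\gamma+b_2\Delta^{\gamma+\mu-1}+b_3\Delta^{\gamma+\sigma-1}=a_1(\alpha)\delta^\gamma .
\]

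The central step is to prove that the solution never reaches the boundary $\|x(t)\|=\delta$. Suppose, for contradiction, that it does, and let $t_1$ be the first instant with $\|x(t_1)\|=\delta$. On $[0,t_1]$ one has $\|x_t\|_h\le\delta$, so Lemma~\ref{lemma_bound_dot_v_pertur_zero_mean} applies and $dv(t,x_t)/dt\le 0$, whence $v(t_1,x_{t_1})\le v(0,\varphi)<a_1(\alpha)\delta^\gamma$. On the other hand, since $\|x(s)\|<\delta=\|x(t_1)\|$ for $s<t_1$, the maximum of $\|x\|$ over the window $[t_1-h,t_1]$ is attained at its right endpoint, so $\|x(t_1+\theta)\|\le\|x(t_1)\|$ for all $\theta\in[-h,0]$, i.e. $x_{t_1}\in S_\alpha$ for every $\alpha\ge 1$. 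The lower bound \eqref{eq:firs_lower_bound_v_pertur_zero_mean_2}, valid on $S_\alpha$ because $\|x_{t_1}\|_h=\delta$, then yields $v(t_1,x_{t_1})\ge a_1(\alpha)\|x(t_1)\|^\gamma=a_1(\alpha)\delta^\gamma$, contradicting the previous inequality. Therefore $\|x_t\|_h<\delta$ for all $t\ge 0$.

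Once forward invariance of the $\delta$-ball is secured, the derivative bound \eqref{eq:bound_dot_v_pertur_zero_mean} holds for every $t\ge 0$, so $v(t,x_t)$ is nonincreasing and bounded below by zero; integrating the bound gives $\int_0^\infty\|x(t)\|^{\gamma+\mu-1}\,dt<\infty$, and since the solution stays bounded, the growth bounds \eqref{eq:bound_Q} and \eqref{eq:bound_f} together with $\|B(t)\|\le\hat b$ make $\dot x$ bounded, hence $x$ uniformly continuous, and Barbalat's lemma forces $\|x(t)\|\to 0$. Thus every $\varphi$ with $\|\varphi\|_h<\Delta$ lies in the attraction region, which is the claim. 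I expect the invariance step of the second paragraph to be the main obstacle: the lower bound is only available on $S_\alpha$, so the whole argument hinges on the observation of \cite{alexandrova2018junction} that at the first boundary-crossing time the state automatically belongs to $S_\alpha$, which is precisely what lets the $S_\alpha$-restricted lower bound be played against the monotone-decreasing functional; the subsidiary point requiring care is confirming $\Delta<\delta$, so that the initial data genuinely start within the region where the lemmas are in force.
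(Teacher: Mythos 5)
Your proof is correct and follows essentially the same route the paper relies on (the paper only sketches it, citing Krasovskii's classical ideas and \cite{alexandrova2018junction}): the sublevel-set comparison $v(0,\varphi)<a_1(\alpha)\delta^\gamma$ obtained from \eqref{eq:upper_bound_2} and \eqref{eq:attraction_LK} with $\Delta<\delta$, plus the junction observation that at a first exit time $t_1$ from the $\delta$-ball the state satisfies $x_{t_1}\in S_\alpha$, so the lower bound of Lemma~\ref{lemma_lower_bound_functional_pertur_zero_mean_2} contradicts the monotonicity given by Lemma~\ref{lemma_bound_dot_v_pertur_zero_mean} --- precisely the invariance property $\|x_t\|_h\leq\delta$ recorded in the paper's remark after the theorem. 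The only imprecision is your claim that $v(t,x_t)$ is ``bounded below by zero'': off $S_\alpha$ the functional has no guaranteed sign, but on the invariant $\delta$-ball the cross term of \eqref{eq:functional_zero_mean_2} is bounded in magnitude by a fixed constant, so $v(t,x_t)\geq -C$ there, and boundedness below (not nonnegativity) is all the integration/Barbalat step requires.
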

\begin{remark}
It follows from the proof of Theorem~\ref{thm:attr_region_LK} that 
$\|\varphi\|_h<\Delta$ implies $\|x_t\|_h\leq \delta$ for all $t\geq 0.$
\end{remark}
\subsection{Estimates for the solutions}
In \cite{portilla2021weighted}, a novel approach for the calculation of the estimates of the solutions is presented, which combines the use of Lyapunov-Krasovskii functionals with ideas of the Razumikhin framework. In this section, we apply this approach to system (\ref{eq:system_zero_mean_value}) using functional \eqref{eq:functional_zero_mean_2}. 
First, as in \cite{portilla2021weighted,portilla2020comparison}, we connect functional $v(t,\varphi)$ with its time derivative based on Lemmas~\ref{lemma_bound_dot_v_pertur_zero_mean} and \ref{lemma_upper_bound}:
\begin{equation}\label{eq:connect_v_dot_v} 
    \frac{dv(t,x_t)}{dt} \leq -\rho v(t,x_t)^{\frac{\gamma+\mu-1}{\gamma}},\quad t\geq0,
\end{equation}
where
$$
\rho=\frac{c}{b^{\frac{\gamma+\mu-1}{\gamma}}\bigl(2\max\{1,h\}\bigr)^{\frac{\mu-1}{\gamma}}},
$$
$b=\max\{b_0,b_1\}$ and $c=\min\{c_0,c_2\}.$ Second, following the idea of \cite{portilla2021weighted}, we consider a comparison equation of the form
\begin{gather*}
    \frac{du(t)}{dt}=-\tilde{\rho}u^{\frac{\gamma+\mu-1}{\gamma}}
(t),\\u(0)=u_0=(\alpha_1 + b_2 \Delta^{\mu-1}+b_3\Delta^{\sigma-1})\|\varphi\|_h^\gamma,
\end{gather*}
where $\tilde{\rho}<\rho,$ and
\begin{align*}
1+\tilde{\rho}h&\left(\frac{\mu-1}{\gamma}\right)\\ &\times(\alpha_1 + b_2 \Delta^{\mu-1}+b_3\Delta^{\sigma-1})^{\frac{\mu-1}{\gamma}}\Delta^{\mu-1}\leq \alpha^{\mu-1}.
\end{align*}
The latter condition is necessary, since the lower bound in Lemma~\ref{lemma_lower_bound_functional_pertur_zero_mean_2} holds on the set of functions $S_\alpha$ only.
The solution for the comparison equation is
\begin{equation*}
    u(t) = u_0\left[1+\tilde{\rho}\left(\frac{\mu-1}{\gamma}\right)u_0^{\frac{\mu-1}{\gamma}}t\right]^{-\frac{\gamma}{\mu-1}}.
\end{equation*}
Finally, we arrive at the following result.
\begin{theorem}\label{th:estimation_LK}
Let the trivial solution of system \eqref{eq:system_zero_mean_value} be homogeneous and asymptotically stable. The solutions of system \eqref{eq:system_zero_mean_value} with initial functions satisfying \mbox{$\|\varphi\|_h< \Delta,$} where $\Delta$ is defined in Theorem~\ref{thm:attr_region_LK}, admit an estimate of the form 
\begin{equation*}
    \|x(t,\varphi)\| \leq \hat{c}_1 \|\varphi\|_h\left[1+\hat{c}_2\|\varphi\|_h^{\mu-1}t\right]^{-\frac{1}{\mu-1}},
\end{equation*}
where
\begin{align*}
\hat{c}_1&=\left(\frac{\alpha_1 + b_2 \Delta^{\mu-1}+b_3\Delta^{\sigma-1}}{a_1(\alpha)}\right)^\frac{1}{\gamma}=\frac{\delta}{\Delta},
\\
\hat{c}_2&=\tilde{\rho}\left(\frac{\mu-1}{\gamma}\right)\left(\alpha_1 + b_2 \Delta^{\mu-1}+b_3\Delta^{\sigma-1}\right)^{\frac{\mu-1}{\gamma}}.
\end{align*}
\end{theorem}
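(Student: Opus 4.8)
The plan is to run a comparison argument in three stages: initialise the functional, dominate it by the explicit comparison solution $u(t)$, and finally invert the lower bound of Lemma~\ref{lemma_lower_bound_functional_pertur_zero_mean_2} to read off the decay of $\|x(t)\|$. First I would fix the initial datum. Evaluating \eqref{eq:upper_bound_2} at $\varphi$ and using $\|\varphi(0)\|\le\|\varphi\|_h<\Delta$ together with $\|\varphi\|_h^{\gamma+\mu-1}\le\Delta^{\mu-1}\|\varphi\|_h^{\gamma}$ and $\|\varphi\|_h^{\gamma+\sigma-1}\le\Delta^{\sigma-1}\|\varphi\|_h^{\gamma}$ gives $v(0,x_0)\le u_0$. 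Since $\|\varphi\|_h<\Delta$, the Remark following Theorem~\ref{thm:attr_region_LK} guarantees $\|x_t\|_h\le\delta$ for all $t\ge0$, so Lemmas~\ref{lemma_bound_dot_v_pertur_zero_mean} and \ref{lemma_upper_bound} are in force along the whole trajectory and the differential inequality \eqref{eq:connect_v_dot_v} holds for every $t\ge0$.

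Next I would invoke the standard comparison principle. Because $\tilde{\rho}<\rho$ and $v\ge0$, inequality \eqref{eq:connect_v_dot_v} yields $\frac{dv(t,x_t)}{dt}\le-\tilde{\rho}\,v(t,x_t)^{(\gamma+\mu-1)/\gamma}$, and with $v(0,x_0)\le u_0=u(0)$ the comparison lemma gives $v(t,x_t)\le u(t)$ for all $t\ge0$, where $u$ is the explicit solution written above. To turn this into a bound on $\|x(t)\|$ I need the lower bound of Lemma~\ref{lemma_lower_bound_functional_pertur_zero_mean_2}, which is valid on $S_\alpha$ only; hence the heart of the proof is to verify that $x_t\in S_\alpha$ for every $t\ge0$.

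Following the weighted technique of \cite{portilla2021weighted}, I would argue by contradiction at the first instant $t_\ast$ at which the trajectory would touch the boundary of $S_\alpha$, that is $\|x(t_\ast+\theta)\|=\alpha\|x(t_\ast)\|$ for some $\theta\in[-h,0]$ (for $t_\ast<h$ the window reaches back into $\varphi$, where $\|\varphi\|_h<\Delta$ and $\alpha>1$ settle the claim directly). On $[0,t_\ast]$ the trajectory lies in $S_\alpha$, so the comparison estimate $v(t,x_t)\le u(t)$ is available there; since $u$ solves an autonomous equation with decreasing solution, its variation over any window of length $h$ is controlled by $\bigl(1+\tilde{\rho} h(\tfrac{\mu-1}{\gamma})u_0^{(\mu-1)/\gamma}\bigr)^{\gamma/(\mu-1)}$, and this quantity attains its largest value as $t_\ast\to h$. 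The displayed condition preceding the theorem, evaluated with $u_0^{(\mu-1)/\gamma}\le(\alpha_1+b_2\Delta^{\mu-1}+b_3\Delta^{\sigma-1})^{(\mu-1)/\gamma}\Delta^{\mu-1}$, is calibrated precisely so that this factor does not exceed $\alpha^{\gamma}$, which rules out the boundary touch and shows $x_t$ never leaves $S_\alpha$.

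With $S_\alpha$-invariance secured, the last step is purely algebraic: from Lemma~\ref{lemma_lower_bound_functional_pertur_zero_mean_2} and $v(t,x_t)\le u(t)$ one has $a_1(\alpha)\|x(t)\|^\gamma\le u(t)$, so $\|x(t)\|\le(u(t)/a_1(\alpha))^{1/\gamma}$. Substituting the explicit $u(t)$ and $u_0=(\alpha_1+b_2\Delta^{\mu-1}+b_3\Delta^{\sigma-1})\|\varphi\|_h^{\gamma}$ and collecting the prefactor and the bracket coefficient reproduces $\hat{c}_1=\bigl((\alpha_1+b_2\Delta^{\mu-1}+b_3\Delta^{\sigma-1})/a_1(\alpha)\bigr)^{1/\gamma}$ and $\hat{c}_2=\tilde{\rho}(\tfrac{\mu-1}{\gamma})(\alpha_1+b_2\Delta^{\mu-1}+b_3\Delta^{\sigma-1})^{(\mu-1)/\gamma}$, giving the claimed estimate. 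I expect the only genuinely delicate point to be the $S_\alpha$-invariance in the third stage: it is where the Krasovskii comparison and the Razumikhin-type constraint on $\alpha$ must be reconciled, and where the strict inequality $\tilde{\rho}<\rho$ and the tuning of the displayed condition are essential; the initialisation and the final inversion are routine.
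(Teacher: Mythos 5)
Your overall scaffolding (initialisation via \eqref{eq:upper_bound_2}, comparison with the explicit solution $u(t)$, a Razumikhin-type control of the window ratio of $u$, and the final inversion through Lemma~\ref{lemma_lower_bound_functional_pertur_zero_mean_2}) is the same as the paper's, but the step you yourself call the heart of the proof has a genuine gap. You attempt to prove that the trajectory never leaves $S_\alpha$, arguing by contradiction at a first touching time $t_\ast$ with $\|x(t_\ast+\theta)\|=\alpha\|x(t_\ast)\|$. To exclude such a touch you must bound $\|x(t_\ast+\theta)\|$ from above \emph{and} $\|x(t_\ast)\|$ from below. All of your tools produce only upper bounds: on $[0,t_\ast]$ they give $a_1(\alpha)\|x(s)\|^\gamma\le v(s,x_s)\le u(s)$, so from $u(t_\ast+\theta)\le\alpha^\gamma u(t_\ast)$ you can only conclude $\|x(t_\ast+\theta)\|\le\alpha\bigl(u(t_\ast)/a_1(\alpha)\bigr)^{1/\gamma}$, which beats $\alpha\|x(t_\ast)\|$ only if $u(t_\ast)\le a_1(\alpha)\|x(t_\ast)\|^\gamma$ --- precisely the reverse of what the comparison yields. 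Nothing prevents $\|x(t_\ast)\|$ from dipping far below $\bigl(u(t_\ast)/a_1(\alpha)\bigr)^{1/\gamma}$; the set $S_\alpha$ is not invariant along solutions, and the paper never claims it is. Your parenthetical for $t_\ast<h$ fails for the same reason: $\|\varphi\|_h<\Delta$ and $\alpha>1$ say nothing against $\|\varphi\|_h>\alpha\|x(t_\ast)\|$ when $\|x(t_\ast)\|$ is small.

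The paper's argument (carried out in the authors' companion derivation of this theorem) avoids this trap by running the contradiction on the scalar inequality $a_1(\alpha)\|x(t)\|^\gamma<u(t)$ rather than on membership in $S_\alpha$. At the first instant $t^{\star}$ where $a_1(\alpha)\|x(t^{\star})\|^\gamma=u(t^{\star})$, this very equality supplies the missing lower bound on $\|x(t^{\star})\|$: for $t^{\star}+\theta\ge 0$ one gets $a_1(\alpha)\|x(t^{\star}+\theta)\|^\gamma\le u(t^{\star}+\theta)<\alpha^\gamma u(t^{\star})=\alpha^\gamma a_1(\alpha)\|x(t^{\star})\|^\gamma$ (and similarly via $u(0)$ when $t^{\star}+\theta<0$), so $x_{t^{\star}}\in S_\alpha$ \emph{at that single instant}. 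Lemma~\ref{lemma_lower_bound_functional_pertur_zero_mean_2} then gives $v(t^{\star},x_{t^{\star}})\ge a_1(\alpha)\|x(t^{\star})\|^\gamma=u(t^{\star})$, which contradicts the \emph{strict} comparison $v(t,x_t)<u(t)$; that strict inequality is proved separately by a derivative argument at a first-equality point, and it is exactly where $\tilde{\rho}<\rho$ is used. Note that your non-strict conclusion $v(t,x_t)\le u(t)$ would not produce a contradiction at $t^{\star}$, so strictness is not a cosmetic detail here. Your calibration of the displayed condition so that the window ratio of $u$ stays below $\alpha^\gamma$ is correct and coincides with the paper's computation; what must change is the object to which it is applied --- the comparison function $u$, through the first-crossing equality, rather than the trajectory itself.
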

\section{Example} 
Consider the system
\begin{align}\label{example_2_perturbed}
&\begin{pmatrix}
\dot{x}_1(t) \\
\dot{x}_2(t) 
\end{pmatrix}
=
\begin{pmatrix}
x_2^\mu(t) \\
-x_1^\mu(t)-x_2^\mu(t-h) 
\end{pmatrix}
\\
    &+\begin{pmatrix}
\cos{t}+\sin(\sqrt{2}t) & 0 \\
0 & \cos{t}+\cos(\sqrt{2}t) 
\end{pmatrix}
\begin{pmatrix}
x_1^\sigma(t-h) \\
x_2^\sigma(t) 
\end{pmatrix},\notag
\end{align}
where $x_1(t),\ x_2(t)\in\mathbb{R},$ $h\in\mathbb{R}^+,$ and $\mu>1$ is a number with odd numerator and denominator. In \cite{book_in_Russian}, the following Lyapunov function for the unperturbed and delay free part of system \eqref{example_2_perturbed} was introduced:
\begin{equation}\label{Lyap_func_example}
    V(x)=\frac{1}{\mu+1}\left(x_1^{\mu+1}+x_2^{\mu+1}\right)+\zeta x_1^\mu x_2,\quad \zeta>0.
\end{equation}
It was shown that if
$$
\zeta<\min\left\{\frac{1}{\mu+1},\frac{4}{(\mu+1)^2}\right\},
$$
then the Lyapunov function \eqref{Lyap_func_example} allows to prove the asymptotic stability of the unperturbed system \eqref{example_2_perturbed} with $h=0.$
Furthermore, the time-derivative of function \eqref{Lyap_func_example} along the trajectories of system \eqref{example_2_perturbed}, when $h=0$, admits an upper bound of the form 
\begin{gather*}
    \dfrac{d V(x(t))}{dt}\leq -\mathrm{w}\|x(t)\|^{2\mu},
    \end{gather*}
where $\mathrm{w}=\dfrac{\kappa}{2^{\mu-1}},$ and
    \begin{gather*}
    \kappa=\min\left\{1-\zeta(\mu+1),\zeta,\frac{\zeta}{1+\zeta}\left(1-\frac{\zeta(1+\mu)^2}{4}\right)\right\}.
\end{gather*}
We take the system parameters $\sigma=\mu=5$, $h=10,$ and set $\zeta=0.0001,$ $\alpha=1.1.$ Compute $\mathrm{w}=6.2\cdot 10^{-6}.$ The constants characterising the Lyapunov function and its derivatives are the following:
\begin{align*}
\alpha_0&=(1/2)^\frac{\mu-1}{2}\left(\frac{1}{\mu+1}-\zeta\right),\\
\alpha_1&=\frac{1}{\mu+1}+\zeta,\\
\beta&=\sqrt{(1+\zeta\mu)^2+(1+\zeta)^2},\quad
\psi=2(\mu+\mu^2\zeta).
\end{align*}
The constants from bounds \eqref{eq:bound_Q} and \eqref{eq:bound_f} are
\begin{gather*}
m_1=m_2=\sqrt{2}, \quad \eta_{11}=\mu,\quad \eta_{12}=0, \quad p_1=p_2=1, \\ q_{11}=q_{22}=\sigma,\quad q_{12}=q_{21}=0.
\end{gather*}
Now, we calculate integral \eqref{eq:L_varepsilon} and compute the function
\begin{equation*}
\omega(\varepsilon)=\varepsilon\left(\dfrac{2\eps+1}{\varepsilon^2+1}
+\dfrac{\textup{max}\{\varepsilon+2\sqrt{2}, 2\varepsilon+\sqrt{2}\}}{\varepsilon^2+2}\right)\xrightarrow[\eps\to 0]{} 0.
\end{equation*}
Choosing the initial function $\varphi(\theta)=[4.9\cdot 10^{-4},\ 4.9\cdot 10^{-4}]^T$, $\theta \in [-10,0]$, we compute the estimate for the solutions of \eqref{example_2_perturbed} from Theorem~\ref{th:estimation_LK}.
The constants characterising the estimate are shown in Table \ref{table_LK_2_perturbed}.
\begin{table}[htb]
\caption{Constants for the estimates of solutions}
\label{table_LK_2_perturbed}
\begin{center}
\begin{tabular}{cccccc}
\hline
$\delta$ & $\Delta$ & $\hat{c}_1$ & $\hat{c}_2$ & $\varepsilon$ & $\tilde{\rho}$\\
\hline
$1\cdot 10^{-3}$ & $7\cdot 10^{-4}$ & $1.41$ & $6.7\cdot 10^{-8}$ & $10^{-14}$ & $3.3\cdot 10^{-7}$\\
 \hline
\end{tabular}
\end{center}
\end{table}
The system response and the computed bound are depicted in Fig.~\ref{figure_estimation_ex_2_perturbed} as continuous and dashed lines, respectively.
    \begin{figure}[H]
     \centering
      \includegraphics[width=0.5\textwidth]{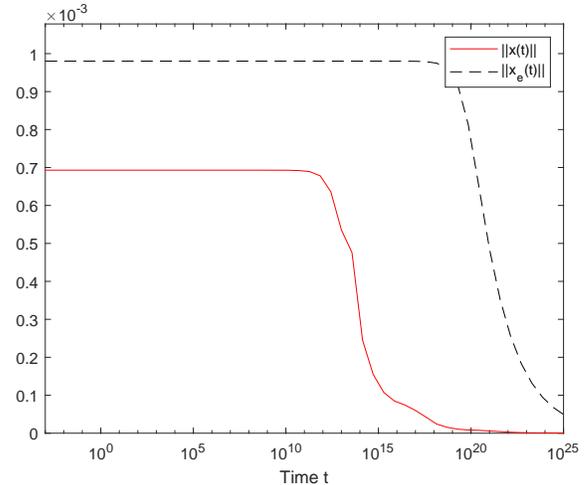}
      \caption{Estimation of the solution of system \eqref{example_2_perturbed} in log scale}
      \label{figure_estimation_ex_2_perturbed}
  \end{figure}
\section{Concluding remarks}
A Lyapunov-Krasovskii functional allowing the analysis of the solutions of homogeneous time-delay systems subject to time varying perturbations allows presenting estimates of the system solutions and of the domain of attraction. The classes of periodic and almost periodic perturbations with zero mean values are studied. An illustrative example validates the results.
\bibliographystyle{IEEEtran.bst}
\bibliography{mybibliography.bib}
\end{document}